\documentclass[sigplan,screen]{acmart}
\usepackage{multirow}
\usepackage{mathtools}
\usepackage{amsthm}
\usepackage{amsmath}
\usepackage{subcaption}

\usepackage{tabularx}
\usepackage{array}
\usepackage{csvsimple}
\usepackage{adjustbox}

\usepackage{algorithm}
\usepackage{algpseudocode}

\usepackage{subcaption}

\setcopyright{none}
\renewcommand\footnotetextcopyrightpermission[1]{}

\newcommand{\red}[1]{\textcolor{black}{#1}}
\newcommand{\csvtext}[1]{\texttt{\detokenize\expandafter{#1}}}

\title{
 \red{Compiling Together: High-Throughput Distributed Quantum Computing via Multi-Compilation}
}
\author{\red{Yipei Liu$^{1}$, Sen Zhang$^{1}$, Zebo Yang$^{2}$, Lei Yang$^{1}$}}
\affiliation{\institution{\red{$^{1}$George Mason University \quad $^{2}$Florida Atlantic University}}\country{}}

\newtheorem{definition}{Definition}

\newtheorem{theorem}{Theorem}

\begin{document}

\begin{abstract}
Quantum computing is a promising paradigm for problems that are challenging for classical machines, but realizing that promise requires far more qubits than a single processor can offer. Distributed quantum computing (DQC) scales out by connecting multiple quantum processing units (QPUs), at the cost of making entanglement the scarce resource: every remote gate consumes a Bell pair, and inter-QPU links generate Bell pairs at finite rates orders of magnitude slower than local gates. Since quantum programs are executed repeatedly, this rate bounds how fast shots complete and thus how fast results are obtained. 
Existing DQC compilers emit a single implementation per circuit, so throughput is capped by its busiest link while other links stay idle. We observe that alternative compilations of the same circuit are logically equivalent yet stress different links; executing them concurrently and pooling their samples converts idle Bell pairs into additional shots. 
We formulate the joint selection of compilations and allocation of shots under per-link Bell-pair capacities as Candidate-Constrained Max-Shot Allocation (CMA), prove it NP-hard, and solve it with a dynamic program and its approximate variant (\red{AppDP}), a compact MILP, and a greedy heuristic (\red{Effi}).
In simulation on six-QPU networks, multi-compilation raises throughput by 2–4.5× over the single compilation and correspondingly improves output fidelity under equal Bell-pair budgets. On hardware, it increases measured fidelity by up to 93\% and reaches the single-compilation fidelity in up to 10× less time. 
Scaling to 36 QPUs and 144-qubit circuits, MILP improves throughput over the single compilation by 76.8\% on average across 48 configurations while \red{Effi} allocates in milliseconds.
\end{abstract}

\maketitle

\fancyhead{} 

\section{Introduction}
\label{sec:intro}

Quantum computing has emerged as a promising computing paradigm with the potential to tackle computational problems that are challenging for classical machines~\cite{preskill2018nisq}. 
However, realizing this potential at practically relevant scales requires quantum systems with substantially more qubits, while scaling a single monolithic quantum processor remains increasingly difficult~\cite{caleffi2024survey,barral2025review}.
Distributed quantum computing (DQC) offers a scalable alternative by connecting multiple quantum processing units (QPUs) through communication links and executing a quantum circuit across them~\cite{cirac1999distributed,caleffi2024survey,barral2025review}. 
Executing gates across different QPUs, however, requires inter-QPU communication supported by shared entanglement resources. Unlike local quantum gates, these entanglement resources must be generated over communication links at finite rates, making communication capacity a critical constraint on distributed execution.


Distribution, however, comes with the cost of inter-QPU entanglement. Remote operations between qubits located on different QPUs consume Bell pairs along their communication paths, while these Bell pairs can only be generated at finite rates. Reported entanglement-generation rates include approximately $9.7$ and $182$ pairs per second~\cite{stephenson2020highrate,main2025distributed}, corresponding to average generation intervals of approximately $103$ and $5.5$\,ms per Bell pair, respectively, whereas local superconducting gates typically complete in tens of nanoseconds~\cite{acharya2025below,abughanem2025ibm}. 
This disparity can make inter-QPU communication a throughput bottleneck in distributed execution. 
Prior work also identifies entanglement generation as a critical constraint in fault-tolerant DQC~\cite{jacinto2026network,stack2026transversal}. In communication-dominated DQC systems, the Bell-pair generation rate can therefore determine the achievable execution throughput.

The throughput limitation is particularly consequential because quantum programs must be executed repeatedly to obtain reliable results. A single circuit execution produces only one measurement sample, or \emph{shot}, from the output distribution. As quantum circuits scale to more qubits, the measurement space grows exponentially, with up to $2^n$ possible bitstrings for an $n$-qubit circuit, increasing the sampling burden required to characterize the output distribution or estimate quantities of interest. Many quantum algorithms, including VQE and QAOA, therefore rely on repeated circuit executions to estimate expectation values and other statistical quantities~\cite{peruzzo2014vqe,farhi2014qaoa,preskill2018nisq}. Consequently, when inter-QPU communication limits execution throughput, the rate at which Bell-pair resources are converted into completed shots directly affects how quickly the quantum computational results can be obtained.

Achieving high shot throughput therefore depends not only on how much entanglement the network can generate, but also on how effectively a compiled circuit can utilize that capacity. Existing DQC compilation flows typically produce a single execution plan by partitioning the circuit, mapping its components to QPUs, and routing remote operations according to an optimization objective such as communication cost~\cite{andresmartinez2019automated,baker2020timesliced,ferrari2021compiler,wu2022autocomm}. However, Bell-pair capacity is distributed across individual communication links, and a single compilation may place most of its communication demand on only a subset of them. Since multiple communication links can generate Bell pairs in parallel, a single compilation may leave substantial entanglement-generation capacity unused if its communication demand is concentrated on only a subset of the links. Different compilations of the same circuit can exhibit different communication patterns and therefore stress different links. Executing multiple such compilations together can exploit complementary link capacities, increasing the aggregate number of completed shots per unit time beyond what any single compilation can achieve.


Realizing this potential requires jointly selecting compiled candidates and allocating shots according to their per-link Bell-pair demands. 
Prior work has explored multiple mappings to mitigate device errors~\cite{tannu2019ensemble} and scheduled circuits across QPUs to reduce makespan~\cite{bahrani2024resource}, but has not addressed how to combine alternative compilations to maximize shot throughput under per-link Bell-pair capacity constraints. Equivalent compilations produce the same ideal output distribution, allowing their quantum measurement samples to be pooled to reduce sampling error.

Two challenges arise in exploiting this opportunity. First, candidate selection and shot allocation are coupled: Candidates cannot be selected based only on their individual throughput. Two high-throughput candidates may rely on the same communication links and therefore compete for the same Bell-pair capacity, while candidates using different links can complement one another and achieve a higher combined throughput. The allocation must therefore jointly determine which candidates to execute and how many shots to assign to each under the per-link capacity constraints. Second, different compilations may incur different communication costs and noise, so increasing the number of completed shots does not necessarily improve the accuracy of the pooled result. The benefits must therefore be evaluated under noisy distributed execution.

To address these challenges, we jointly select compiled candidates and allocate their shots to exploit complementary link capacities. We evaluate both the resulting throughput gains and their impact on output fidelity under noisy distributed execution. Our main contributions are as follows:

\begin{itemize}
  \item We introduce multi-compilation execution for DQC, exploiting
  complementary Bell-pair generation resource across alternative compilations to increase execution shot throughput.

  \item We formulate Candidate-Constrained Max-Shot Allocation (CMA),   prove its NP-hardness, and design multiple allocation strategies that balance solution quality
and optimization overhead.

  \item We validate multi-compilation execution through distributed
  simulator and IBM quantum hardware. Across multiple circuit workloads and network topologies, it improves execution shot throughput over single compilation and fidelity under equal Bell-pair budgets.

  \item We demonstrate scalability up to 36 QPUs and 144-qubit circuits.
  MILP achieves the highest throughput across the evaluated configurations,
  \red{AppDP} retains most of the MILP throughput with approximation, while
  \red{Effi} provides millisecond-scale allocation overhead.
\end{itemize}

\section{Background and Related Work}
\label{sec:background}

\subsection{Distributed Quantum Computing}
\label{sec:background_dqc}

Distributed quantum computing (DQC) connects multiple quantum processing units
(QPUs) through quantum and classical communication links, allowing a logical
circuit to be partitioned and executed across multiple processors
~\cite{cirac1999distributed,caleffi2024survey,barral2025review}.
Rather than scaling a single monolithic processor, DQC provides a modular
approach in which several smaller QPUs cooperate to execute circuits whose
qubit or connectivity requirements exceed the capability of an individual
device.

\begin{figure}[b]
\centering
\includegraphics[width=\columnwidth]{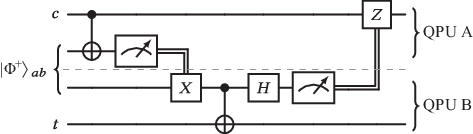}
\Description{Quantum circuit of a remote CNOT. The control qubit c and the
communication qubit a sit on QPU A; the communication qubit b and the target t sit
on QPU B; a and b start in a Bell pair. A CNOT from c to a, a measurement of a, and
a conditional X on b form the cat-entangler. A CNOT from b to t applies the gate.
A Hadamard and measurement on b with a conditional Z on c form the
cat-disentangler.}
\caption{Remote 2-qubit gate (telegate) between data qubits on two QPUs using a Bell pair and classical communication.}
\label{fig:bg-telegate}
\end{figure}

To execute a circuit in a DQC system, logical qubits are first assigned to
different QPUs. Gates whose operands reside on the same QPU are executed
locally, while a two-qubit gate whose operands are placed on different QPUs
requires an inter-QPU operation. Such remote operations can be implemented
using either \emph{telegate} or \emph{teledata}. Telegate directly realizes a
remote two-qubit gate through shared entanglement, local operations,
measurements, and classical feed-forward
~\cite{eisert2000optimal,gottesman1999teleportation,
yimsiriwattana2004generalized}. 
Figure~\ref{fig:bg-telegate} illustrates a remote 2-qubit gate implemented through telegate. The control and target data qubits remain on their original QPUs, while a Bell pair provides the nonlocal quantum resource. Local gates, intermediate measurements, and classical feed-forward corrections first correlate the remote communication qubit with the control and then disentangle the auxiliary qubits after the target operation is applied. Thus, the nonlocal gate is realized without physically moving either data qubit, at the cost of consuming one Bell pair and exchanging classical measurement outcomes between the two QPUs. Teledata follows a different strategy: it teleports a data qubit to another QPU, where subsequent gates can then be executed locally~\cite{bennett1993teleporting}. Both approaches therefore convert nonlocal circuit operations into local quantum operations supported by Bell-pair resources and classical communication.

Both mechanisms rely on entanglement shared between communicating QPUs. Consequently, the placement of logical qubits and the routing of remote operations determine which communication links are used and how many Bell pairs must be supplied by each link. Different compilations of the same logical circuit can therefore induce different communication demands across the DQC network. Existing DQC compilers optimize placement and routing to reduce the number or cost of remote operations~\cite{andresmartinez2019automated,baker2020timesliced,ferrari2021compiler,cuomo2023optimized,wu2022autocomm,wu2023qucomm}, but each produces a single execution plan per circuit. Related efforts execute several variants of a circuit to diversify device errors~\cite{tannu2019ensemble} and co-schedule multiple programs on a single device~\cite{das2019multiprogramming} or across DQC interconnects~\cite{bahrani2024resource}; none of them combines equivalent compilations to exploit per-link Bell-pair capacity.

\subsection{Bell-Pair Communication}
\label{sec:background_bell}

Remote operations require Bell pairs established between communication qubits
at different QPUs. A Bell pair,
$|\Phi^+\rangle=(|00\rangle+|11\rangle)/\sqrt{2}$, provides the shared
entanglement consumed by protocols such as telegate and teledata. In many
quantum-network implementations, entanglement generation is probabilistic:
communication qubits repeatedly attempt to establish a Bell pair through a
physical link, and successful generation is heralded by a measurement event.

As a result, Bell pairs are not available instantaneously. Instead, each
communication link supplies them at a finite generation rate. Experimental
quantum links have demonstrated generation rates ranging from tens to hundreds
of Bell pairs per second depending on the hardware and operating conditions
~\cite{stephenson2020highrate,main2025distributed}. We therefore characterize
each communication link $e$ by a Bell-pair generation rate $R_e$, measured in
pairs/s.

Importantly, Bell-pair generation is associated with individual communication
links rather than a single network-wide resource pool. Different links can
generate and consume Bell pairs concurrently, while unused capacity on one
link cannot directly compensate for insufficient capacity on another. 
This per-link communication model, consistent with models used in quantum-network routing~\cite{pant2019routing,shi2020concurrent}, forms the basis for the DQC resource formulation used in the following sections.

\section{{Observation and Motivation}}
\label{sec:motivation}
This section builds the case for multi-candidate execution in three steps.
Entanglement is the scarce resource of a DQC system, and a single compilation leaves
much of it idle (Section~\ref{sec:motiv_scarce}); the idle capacity is not abstract,
because under a fixed Bell-pair budget it is lost output fidelity
(Section~\ref{sec:motiv_accuracy}); and combining several compiled candidates
recovers that fidelity, most strongly when entanglement is scarce
(Section~\ref{sec:motiv_multi}).

\begin{figure}[t]
    \centering
    \includegraphics[width=\columnwidth]{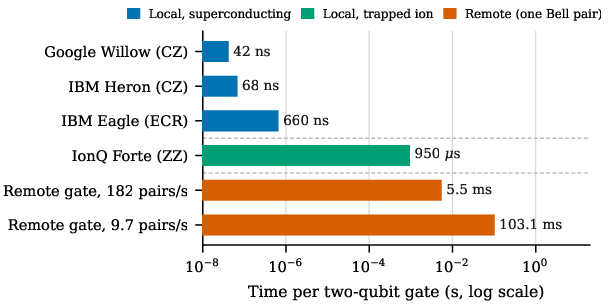}
    \caption{Comparison of local two-qubit gate latency and Bell-pair generation time for remote quantum operations.}
    \label{fig:motiv-cost}
\end{figure}

\begin{figure*}[h]
    \centering
    \includegraphics[width=\textwidth]{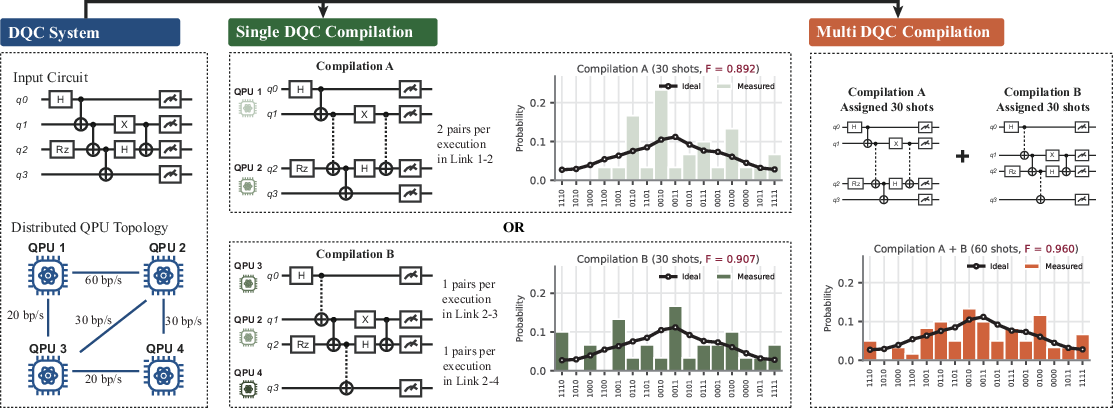}
    \caption{A four-QPU example illustrating complementary compilations and the fidelity improvement obtained by merging their measurement results. Under the same Bell-pair capacities, a single compilation exploits only part of the DQC resources and executes 30 shots, whereas multiple compilations achieve 60 shots with higher fidelity.}
        \label{fig:example1}
\end{figure*}

\subsection{Bell-Pair Generation Limits Throughput}
\label{sec:motiv_scarce}

Bell-pair generation for inter-QPU communication can be substantially slower than local two-qubit operations. For example, a CZ gate takes about $42\,\mathrm{ns}$ on Google's Willow processor~\cite{acharya2025below} and $68\,\mathrm{ns}$ on IBM Heron devices~\cite{abughanem2025ibm}, while two-qubit gates on trapped-ion systems can operate on a longer timescale, such
as about $950\,\mu\mathrm{s}$ on IonQ Forte~\cite{chen2024forte}. By comparison, the distributed trapped-ion system ~\cite{main2025distributed} generated remote entanglement at an average rate of $9.7\,\mathrm{s}^{-1}$, corresponding to about $103\,\mathrm{ms}$ per Bell pair. An earlier experiment on the same apparatus reported a rate of $182\,\mathrm{s}^{-1}$, or approximately $5.5\,\mathrm{ms}$ per Bell pair~\cite{stephenson2020highrate}. Figure~\ref{fig:motiv-cost}
compares these local gate times with the Bell-pair generation times required for inter-QPU communication. Although the measurements come from different hardware, they illustrate that remote entanglement generation is substantially slower than local quantum operations, motivating the communication-dominated regime considered in this work.

Under this regime, Bell-pair generation rates directly constrain circuit throughput. Each compiled candidate fixes the placement and routing of remote gates, and therefore induces a specific Bell-pair demand across the DQC network. Its execution rate can be limited by the most heavily used link even when substantial communication capacity remains available on other links. Since different DQC links can operate concurrently, multiple compiled candidates with complementary communication demands can utilize otherwise-idle links
simultaneously. Combining such candidates therefore provides an opportunity to better exploit the available entanglement resources and increase aggregate circuit throughput.



\subsection{Wasted Bell-Pair Capacity Leads to Fidelity Loss}
\label{sec:motiv_accuracy}
Figure~\ref{fig:example1} gives a concrete four-qubit example, whose measurement space contains $2^4=16$ possible bitstrings. In Compilation~A, $q_0$ and $q_1$
are placed on QPU~1, while $q_2$ and $q_3$ are placed on QPU~2. Each execution
therefore consumes two Bell pairs on link $1$--$2$. With a generation rate of 60 Bell pairs/s, this compilation A can execute 30 shots/s. Compilation~B uses a
different partition: $q_0$ is placed on QPU~3, $q_1$ and $q_2$ on QPU~2, and
$q_3$ on QPU~4. Each execution consumes one Bell pair on link $2$--$3$ and one on link $2$--$4$, both of which provide 30 Bell pairs/s, again allowing 30 shots/s. Importantly, the two compilations use different communication links, so their Bell-pair demands do not compete under the independent-link model. Executing them together therefore increases the aggregate execution rate from
30 to 60 shots/s.

The corresponding measurement distributions further show the benefit of this additional sampling. For Compilation~A, the measured probabilities of the 16 bitstrings, shown by the light-green bars, deviate noticeably from the ideal
distribution, resulting in a fidelity of 0.892. Compilation~B produces a different measured distribution, shown by the dark-green bars, with a fidelity of 0.907. Because the two compilations implement the same logical circuit, their
measurement samples represent the same set of quantum output bitstrings and can be merged directly. Combining the 30 shots from A and the 30 shots from B produces 60 samples in total; the resulting orange distribution more closely follows the
ideal black curve and improves the fidelity to 0.960. Thus, complementary compilations can convert otherwise-idle communication capacity into additional measurement samples and a more accurate estimate of the quantum output
distribution.


\subsection{Multi-compilation Outperforms Single}
\label{sec:motiv_multi}

Figure~\ref{fig:motiv-resource} evaluates this opportunity in a more general setting. We compile an 8-qubit circuit into six compilations with different placements on the DQC network and allow them to execute concurrently. We vary the Bell-pair capacity of the whole DQC.
Under the same DQC resource, multi-compilation can utilize multiple communication links concurrently, completing more circuit executions than a single compilation. As shown in Figure~\ref{fig:motiv-resource}, these additional executions consistently translate into higher output fidelity. The improvement is most pronounced when Bell-pair resources are scarce and gradually diminishes as the available capacity becomes sufficient for a single compilation to complete the requested shots.

\begin{figure}[t]
    \centering
    \includegraphics[width=\columnwidth]{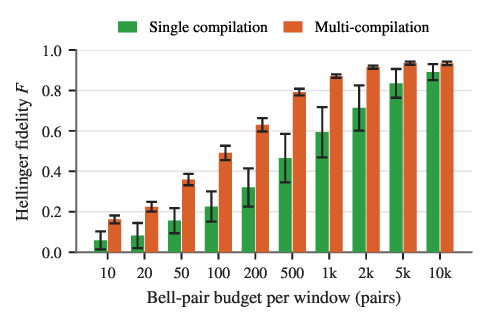}
    \caption {Fidelity under different Bell-pair budgets for single-
and multi-compilation execution. Bars show mean $\pm$ standard deviation
over 20 independent replicates.}
    \label{fig:motiv-resource}
\end{figure}

\section{DQC System Model}
\label{sec:system}

\subsection{Network Resources and Compiled Candidates}

We consider a DQC platform represented by a network $G=(V,E)$,
where vertices are QPUs and edges are inter-QPU communication links.
Each link $e$ supplies usable Bell pairs at a finite rate
$R_e\in\mathbb{Z}_{\geq0}$, measured in pairs/s. Link rates may be
heterogeneous. We focus on a communication-limited setting in which local
computation and control resources do not impose additional binding constraints.
The model accounts for aggregate communication demand; it does not specify
the timing of individual gates or a physical execution schedule.

For a logical circuit $\mathcal{Q}$, compilation produces a fixed candidate
set $C=\{1,\ldots,m\}$. Each candidate implements $\mathcal{Q}$ through a
particular partition, QPU mapping, and routing of remote operations.
Candidate $i$ consumes $a_{i,e}\in\mathbb{Z}_{\geq0}$ Bell pairs on link $e$
per shot, giving the resource vector $\mathbf{a}_i=(a_{i,e})_{e\in E}$.
We assume $\mathbf{a}_i\neq\mathbf{0}$ for every $i$: each distributed
candidate uses at least one communication link. This assumption also ensures
that finite link rates bound every candidate's allocation.
Candidate generation is outside the allocation problem; the optimizer takes
the resulting resource vectors as input.

\subsection{Allocation and Resource Complementarity}
\label{sec:shot_allocation}

We allocate throughput in integer units of one shot/s.
Thus, $x_i\in\mathbb{Z}_{\geq0}$ is the rate assigned to candidate $i$,
the aggregate throughput is $S(x)=\sum_{i\in C}x_i$, and the demand on
link $e$ is $U_e(x)=\sum_{i\in C}a_{i,e}x_i$ pairs/s.
Integrality is an allocation granularity adopted in this work, rather than
a requirement on arbitrary long-run average execution rates.
All allocations below refer to this discretized rate model.


\section{Problem Formulation}
\label{sec:problem}

\subsection{Problem Statement}
\label{sec:problem_definition}

Given a QPU network $G=(V,E)$, the Bell-pair generation rate $R_e$ of each
communication edge $e\in E$, and a
set of compiled candidates $C$, our objective is to determine the execution
rate $x_i$ assigned to each candidate $i\in C$ so as to maximize the aggregate
execution throughput of the logical quantum circuit.

Each compiled candidate may exhibit a different communication resource
consumption pattern due to its circuit partitioning, QPU mapping, and routing
decisions. Consequently, assigning executions to different candidates induces
different Bell-pair demands across the communication links. The allocation
must therefore ensure that the aggregate Bell-pair consumption rate on every communication edge does not exceed its available Bell-pair generation rate.
At the same time, different candidates may compete for Bell-pair generation capacity on the same DQC link or exhibit complementary communication demands across different links. We therefore jointly determine which compiled candidates to execute and how many shots to assign to each selected candidate in order to maximize the aggregate circuit-execution throughput. We formulate this joint candidate-selection and shot-allocation task as the Candidate-Constrained Max-Shot Allocation (CMA) problem.

\subsection{Allocation Problem}
\label{sec:feasibility}

The following definitions specify the communication resource and candidate selection constraints of the allocation problem.

\begin{definition}[Edge-resource feasibility]
\label{def:edge_feasibility}

For an allocation $x$, the aggregate Bell-pair consumption rate on edge $e$ is
\begin{equation}
U_e(x)=\sum_{i\in C}a_{i,e}x_i.
\label{eq:edge_usage}
\end{equation}
The allocation is edge-resource feasible if
$U_e(x)\leq R_e$ for every $e\in E$ in DQC system.
\end{definition}

This condition ensures that the aggregate Bell-pair demand of all executed candidates does not exceed the generation rate of any DQC link. The constraint is imposed separately on each edge because unused capacity on one link cannot compensate for insufficient capacity on another.

\begin{definition}[Selected compiled candidates]
\label{def:selected_candidates}

For an allocation $x$, let
$C_{\mathrm{select}}(x)=\{i\in C\mid x_i>0\}$ denote the set of active
compiled candidates, and let
$N_{\mathrm{select}}(x)=|C_{\mathrm{select}}(x)|$.
Given a candidate limit $K$, a valid allocation must satisfy
$N_{\mathrm{select}}(x)\leq K$.
\end{definition}

The candidate limit controls how many compiled implementations of the same
logical circuit may jointly participate in execution. This limit captures
the practical overhead of maintaining and scheduling multiple compiled
implementations.

\begin{definition}[Candidate-Constrained Max-Shot Allocation]
Given a DQC network $G=(V,E)$, Bell-pair generation rates
$\{R_e\}_{e\in E}$, a compiled candidate set $C$, per-shot
resource requirements $\{a_{i,e}\}$, and a candidate limit $K$,
the Candidate-Constrained Max-Shot Allocation (CMA) problem determines
an integer execution-rate allocation that maximizes the aggregate
circuit-execution throughput:
\begin{align}
\max_x\quad
& S(x)=\sum_{i\in C}x_i
\label{eq:CMA_obj}\\
\mathrm{s.t.}\quad
& \sum_{i\in C}a_{i,e}x_i\leq R_e,
&& \forall e\in E,
\label{eq:CMA_edge}\\
& \left|\{i\in C:x_i>0\}\right|\leq K,
\label{eq:CMA_candidate}\\
& x_i\in\mathbb{Z}_{\geq0},
&& \forall i\in C.
\label{eq:CMA_domain}
\end{align}
\end{definition}

Constraint~\eqref{eq:CMA_edge} ensures edge-resource feasibility
as defined in Definition~\ref{def:edge_feasibility}, while
Constraint~\eqref{eq:CMA_candidate} enforces the candidate limit
in Definition~\ref{def:selected_candidates}.
The optimal objective value is the maximum feasible
circuit-execution throughput under these constraints.

\subsection{Computational Hardness}
\label{sec:hardness}

For a single compiled candidate, the maximum executable shot rate is determined directly by its bottleneck communication link. However, when multiple candidates participate, their resource demands interact across different links, and the optimal allocation cannot generally be determined from their individual throughput values alone. We next establish the computational hardness of CMA.

\begin{definition}[Decision version]
Given a QPU network $G=(V,E)$, Bell-pair generation rates
$\{R_e\}_{e\in E}$, a compiled candidate set $C$, a candidate limit $K$, and a
target throughput $T$, the decision version asks whether there exists an
integer allocation $x$ satisfying Constraints~\eqref{eq:CMA_edge}--
\eqref{eq:CMA_domain} such that $S(x)\geq T$.
\end{definition}

\begin{theorem}
The decision version is NP-complete. Consequently, the optimization version is
NP-hard.
\end{theorem}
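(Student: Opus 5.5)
We prove membership in NP and NP-hardness separately; the reduction is the substantive part.

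\emph{Membership in NP.} The certificate is the allocation $x$ itself. Its size must be polynomial. Every candidate has $\mathbf{a}_i\neq\mathbf{0}$, so there is an edge $e$ with $a_{i,e}\ge 1$. Edge feasibility then forces $x_i\le R_e\le\max_e R_e$, and each $x_i$ has bit length polynomial in the input. Given $x$, we check each constraint~\eqref{eq:CMA_edge} for every edge, count the nonzero entries and compare the count with $K$, and compare $S(x)$ with $T$. All of this is polynomial-time arithmetic.

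\emph{NP-hardness.} We reduce from Exact Cover by 3-Sets (X3C). An instance consists of a universe $U=\{u_1,\dots,u_{3q}\}$ and a family $\mathcal{F}=\{F_1,\dots,F_m\}$ of 3-element subsets, and asks whether some $q$ of the sets partition $U$.

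We build a star network $G$ with a hub QPU and $3q$ leaf QPUs. The link $e_j$ to leaf $j$ corresponds to element $u_j$ and has rate $R_{e_j}=1$. For each set $F_i$ we create candidate $i$ with $a_{i,e_j}=1$ if $u_j\in F_i$ and $a_{i,e_j}=0$ otherwise. We set $K=q$ and $T=q$. Each vector $\mathbf{a}_i$ has exactly three ones, so it is nonzero as the model requires. The construction is a star network whose nonzero Bell-pair demands encode set membership, so it is a legitimate DQC instance, and it is clearly polynomial.

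\emph{Correctness.} Suppose an exact cover exists. Setting $x_i=1$ for the $q$ chosen sets and $0$ otherwise uses each link exactly once and selects $q=K$ candidates, so $S(x)=q$.

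Conversely, take any feasible $x$ with $S(x)\ge q$. Every shot of any candidate consumes exactly $3$ Bell pairs in total. The total capacity is $3q$, so
\[\textstyle 3S(x)=\sum_e U_e(x)\le 3q,\]
which gives $S(x)=q$ and forces every link to be saturated. Each $x_i\le 1$, because two shots of candidate $i$ would need $2$ pairs on an edge of $F_i$, which has rate $1$. Hence $S(x)=q$ means exactly $q$ candidates have $x_i=1$. Unit capacities make their sets pairwise disjoint, and saturation of every link makes them cover $U$. They therefore form an exact cover.

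In this instance the candidate limit is implied by the capacity constraints, so we have in fact shown hardness even without constraint~\eqref{eq:CMA_candidate}. That strengthens the result and is harmless. NP-hardness of the optimization version follows because an optimal value answers the decision question by comparison with $T$.

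\emph{Main obstacle.} The delicate step is the converse direction. It has to rule out fractional or overlapping use of candidates, and the uniform per-shot total demand of $3$ together with unit capacities is exactly what closes that gap. If the counting argument failed, the fallback would be to use $K$ directly: a reduction from maximum independent set or set packing with $R_e=1$ makes the selected candidates' supports disjoint, which again yields the equivalence.
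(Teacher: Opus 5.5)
Your proof is correct, but it takes a different route from the paper's.

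The paper reduces from set packing. It creates one candidate per subset and one unit-rate edge $e_u$ per element. It also adds a private unit-rate edge $p_i$ for each candidate, which caps $x_i\le 1$ and guarantees $\mathbf{a}_i\neq\mathbf{0}$ even for empty subsets. It then sets $K=T$, so that $S(x)\ge T$ forces exactly $T$ selected candidates, whose sets are pairwise disjoint by the unit capacities.

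You instead reduce from X3C on a star and replace both the private edges and the role of $K$ with a counting argument. The uniform per-shot demand of $3$, against total capacity $3q$, pins $S(x)=q$ and saturates every link, while $x_i\le 1$ comes for free from $|F_i|=3$.

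Your version gains two things. It embeds the hard instances in one of the evaluated topologies. It also gives a more careful NP-membership argument: the paper never bounds the certificate size, which you do via $\mathbf{a}_i\neq\mathbf{0}$. The paper's version is more direct and does not rely on uniform set sizes, since the private edges handle arbitrary (including empty) subsets. The fallback you sketch at the end is essentially the paper's proof.

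Your remark that the candidate limit is redundant is not special to your reduction. In the paper's construction, any feasible $x$ with $S(x)\ge T$ already selects at least $T$ pairwise-disjoint sets, so $K$ is inessential there too. Finally, both reductions use only $0/1$ capacities and demands, so both actually establish strong NP-hardness. This is consistent with the paper's DP being pseudo-polynomial only for fixed $d$.
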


\begin{proof}
The decision version belongs to NP because, given an allocation $x$, one can check all edge-capacity constraints, count the selected candidates, and compute $S(x)$ in polynomial time.

We reduce from the decision version of maximum set packing. Given a universe $U$, a collection of subsets $\mathcal{S}=\{S_1,\ldots,S_m\}$, and an integer target $T$, the set-packing decision problem asks whether there exist at least $T$ pairwise-disjoint subsets. From this instance, construct an allocation instance as follows. Create one compiled candidate $i$ for each subset $S_i$. For every element $u\in U$, create one communication edge $e_u$ with Bell-pair generation rate $R_{e_u}=1$. Candidate $i$ consumes one Bell pair on edge $e_u$ if $u\in S_i$, and consumes zero Bell pairs on that edge otherwise.
Additionally, create one private communication edge $p_i$ for each candidate
$i$, also with generation rate $R_{p_i}=1$, and let only candidate $i$ consume
one Bell pair on $p_i$. Set the candidate limit to $K=T$ and set the target
throughput to $T$.

The private edge $p_i$ forces $x_i\leq 1$ for every candidate. Therefore, any
feasible allocation with $S(x)\geq T$ must select exactly $T$ candidates, each
with one shot. If two selected candidates correspond to subsets that share an
element $u$, then both consume the unit-capacity edge $e_u$, violating
$R_{e_u}=1$. Hence the selected subsets must be pairwise disjoint. Conversely,
if the set-packing instance contains $T$ pairwise-disjoint subsets, assigning
one shot to the corresponding $T$ candidates satisfies every edge-capacity
constraint and achieves throughput $T$.

Thus, the constructed allocation instance is feasible with throughput at least
$T$ if and only if the original set-packing instance has a packing of size at
least $T$. The reduction is polynomial, so the decision version is
NP-complete. Therefore, the optimization version is NP-hard.
\end{proof}

\section{Optimization Methods}
\label{sec:methods}

We first develop a dynamic program that directly captures the candidate-selection and resource-allocation structure of CMA. 
We then present a compact MILP formulation that avoids explicit resource-state enumeration. Finally, we introduce \texttt{\red{Effi}}, a lightweight allocation approach designed to reduce the computational overhead of optimization.

\subsection{Dynamic Programming}
\label{sec:method_dp}

{Let $E^+=\{e\in E:\exists i\in C,\ a_{i,e}>0\}$ denote the set
of DQC links used by at least one candidate, and let $d=|E^+|$. Links outside $E^+$ do not constrain the allocation and are omitted from the resource state. For candidate $i$, define
$\mathbf{w}_i=(a_{i,e})_{e\in E^+}$ and let $\mathbf{B}=(R_e)_{e\in E^+}$. Since $R_e$, $a_{i,e}$, and $x_i$ are integer-valued, the resulting resource states are discrete. All vector inequalities below are componentwise. We assume that every candidate consumes at least one Bell pair on some communication link.}

{Let $F(j,k,\mathbf{b})$ denote the maximum number of shots achievable
using the first $j$ candidates, selecting at most $k$ of them, under
the available resource vector $\mathbf{b}$.
For candidate $j$, the maximum number of shots that can be assigned
under $\mathbf{b}$ is
$M_j(\mathbf{b})=
\min_{e\in E^+:w_{j,e}>0}
\lfloor b_e/w_{j,e}\rfloor$.
For $j\geq1$ and $k\geq1$,
\begin{equation}
\begin{aligned}
&F(j,k,\mathbf{b})=\max\Bigl\{F(j-1,k,\mathbf{b}),\\[-2pt]
&\hspace{2.5em}\max_{1\le z\le M_j(\mathbf{b})}\bigl[z+F(j-1,k-1,\mathbf{b}-z\mathbf{w}_j)\bigr]\Bigr\}.
\end{aligned}
\label{eq:dp_recurrence}
\end{equation}
The first term skips candidate $j$, while the second assigns it
$z$ positive shots, consumes $z\mathbf{w}_j$ resources, and uses one
candidate slot. The inner maximization is omitted when
$M_j(\mathbf{b})=0$. The boundary conditions are
$F(0,k,\mathbf{b})=0$ and $F(j,0,\mathbf{b})=0$, and the optimal
objective value is $F(m,K,\mathbf{B})$, where $m=|C|$ is the total number of compiled candidates and $K$ is the maximum number of candidates allowed in the allocation.}




\paragraph{Computational complexity.}
Let $M_{\max}=\max_{i\in C}M_i(\mathbf{B})$ and
$N_B=\prod_{e\in E^+}(B_e+1)$ denote the maximum feasible
shot assignment to any single candidate and the number of
possible resource vectors, respectively.
A direct tabulation contains $O(mKN_B)$ states.
Each state considers skipping the current candidate or
assigning it up to $M_{\max}$ positive shots, with
$O(d)$ resource-vector operations per transition.
The resulting time and space bounds are
\begin{equation}
\begin{aligned}
T_{\mathrm{DP}}
&=O\!\left(mK\,d\,(M_{\max}+1)N_B\right),\\
S_{\mathrm{DP}}
&=O\!\left(KN_B\right).
\end{aligned}
\label{eq:dp_complexity}
\end{equation}
The space bound uses rolling storage over the candidate
index and excludes storage for allocation reconstruction.
When all effective link capacities equal $B$,
$N_B=(B+1)^d$, so the resource grid grows exponentially
with the number of effective links.
For fixed $d$, the DP is pseudo-polynomial because its
complexity depends on the numerical capacity values
rather than only on their encoding lengths.
The sparse implementation stores only reachable states,
but the frontier can still grow substantially.

\paragraph{Approximate dynamic programming.}
To control state-space growth, we introduce
\texttt{\red{AppDP}}, which limits the number of states
retained in the sparse DP frontier.
It follows the same allocation transitions as the exact
DP but prunes excess states whenever the prescribed
state limit is exceeded.
This limits the number of states carried forward to
subsequent iterations, at the cost of potentially
discarding states needed for an optimal allocation.
Thus, \texttt{\red{AppDP}} (with a frontier cap of 100 states) preserves allocation feasibility but does not generally guarantee optimality.

\subsection{Mixed-Integer Linear Programming}
\label{sec:method_milp}

The DP explicitly represents multidimensional resource budgets.
To avoid constructing this resource-state space, we express
CMA as a compact MILP.
For each candidate $i$, introduce a binary selection variable
$y_i$, and define its standalone shot bound as
$M_i=M_i(\mathbf{B})$.
Every feasible allocation satisfies $x_i\leq M_i$, because
candidate $i$ alone cannot consume more Bell pairs than any
of its supporting links provides.

The equivalent MILP is
\begin{align}
\max_{x,y}\quad
& \sum_{i\in C}x_i
\label{eq:milp_obj}\\
\mathrm{s.t.}\quad
& \sum_{i\in C}a_{i,e}x_i\leq R_e,
&& \forall e\in E^+,
\label{eq:milp_resource}\\
& y_i\leq x_i\leq M_i y_i,
&& \forall i\in C,
\label{eq:milp_link}\\
& \sum_{i\in C}y_i\leq K,
\label{eq:milp_cardinality}\\
& x_i\in\mathbb{Z}_{\geq0},\quad y_i\in\{0,1\},
&& \forall i\in C.
\label{eq:milp_domain}
\end{align}

Constraint~\eqref{eq:milp_link} ensures that $y_i=1$ if and
only if candidate $i$ receives positive shots.
Consequently, Constraint~\eqref{eq:milp_cardinality} enforces
the selected-candidate limit.
Any feasible CMA allocation can be extended to a feasible
MILP solution by setting $y_i=1$ if $x_i>0$ and $y_i=0$ otherwise.
Conversely, every feasible MILP solution satisfies the original
resource and candidate-selection constraints.
The reformulation therefore preserves both feasibility
and the optimal objective value.

\paragraph{MILP complexity.}
The formulation contains $m$ integer shot variables, $m$ binary selection variables, and $d+2m+1$ linear
inequalities, excluding variable bounds.
Unlike the full DP resource grid, the number of variables and constraints does not grow with the numerical link capacities.
Larger capacities instead change the constraint coefficients or right-hand sides and their encoding lengths.

This compact representation does not imply polynomial-time solvability: CMA remains NP-hard, and exact MILP solution
may require extensive combinatorial search. Nevertheless, the formulation allows a solver to exploit relaxations, bounds, and pruning without explicitly enumerating all capacity states.

\subsection{\red{Lightweight Greedy Allocation}}
\label{sec:method_Efficient}

Although the preceding methods can improve resource utilization through joint allocation, their computational overhead can become substantial: DP may require a large multidimensional resource frontier, while MILP relies on potentially expensive integer optimization. Meanwhile, communication resources may remain underutilized when execution is constrained by bottleneck links, leaving residual capacity on other links. These considerations motivate \texttt{\red{Effi}}, a lightweight allocation method that incrementally exploits the remaining per-link resources without solving the full joint optimization problem. Its goal is to reduce allocation overhead while retaining competitive solution quality.

Let $R_e^{(t)}$ be the residual capacity of link $e$ before iteration $t$, with
$R_e^{(0)}=R_e$.  Under this residual vector, candidate $i$ can execute at most
\begin{equation}
s_i^{(t)}=
\min_{e:a_{i,e}>0}
\left\lfloor\frac{R_e^{(t)}}{a_{i,e}}\right\rfloor
\label{eq:efficient-increment}
\end{equation}
additional shots.  The algorithm selects a candidate maximizing
$s_i^{(t)}$, assigns all of its feasible shots, and subtracts their per-link
cost from the residual vector.

\begin{algorithm}[h]
\caption{\red{Effi}: progressive compiled loop}
\label{alg:efficient-allocation}
\begin{algorithmic}[1]
\Require Candidate pool, costs $a_{i,e}$, capacities $R_e$, limit $K$
\State $R_e^{\mathrm{res}}\gets R_e$ for every link $e$;
       $x_i\gets0$ for every candidate $i$
\For{$t=1,\ldots,K$}
    \ForAll{candidates $i$}
        \State $s_i\gets
        \displaystyle\min_{e:a_{i,e}>0}
        \left\lfloor R_e^{\mathrm{res}}/a_{i,e}\right\rfloor$
    \EndFor
    \State Choose candidate $i^\star$ with the largest $s_i$
    \If{$s_{i^\star}=0$}
        \State \textbf{break}
    \EndIf
    \State $x_{i^\star}\gets x_{i^\star}+s_{i^\star}$
    \ForAll{links $e$}
        \State $R_e^{\mathrm{res}}\gets
        R_e^{\mathrm{res}}-a_{i^\star,e}s_{i^\star}$
    \EndFor
\EndFor
\State \Return $x$
\end{algorithmic}
\end{algorithm}

Each update in Algorithm~\ref{alg:efficient-allocation} is feasible because Eq.~\eqref{eq:efficient-increment} guarantees
$a_{i^\star,e}s_{i^\star}\le R_e^{\mathrm{res}}$ on every link.  The first iteration is exactly the single-candidate baseline, and later iterations only add feasible shots, so \texttt{\red{Effi}} never completes fewer shots than that baseline.  It is nevertheless a heuristic: an early locally best candidate can consume capacity needed by a better later combination, so unlike \texttt{MILP} it provides no global-optimality certificate.  Its running time is $O(Km|E|)$ for $m$ compiled candidates, linear in the candidate pool and number of links at each of the at most $K$ iterations.

\begin{figure*}[t]
\centering
\includegraphics[width=0.8\textwidth]{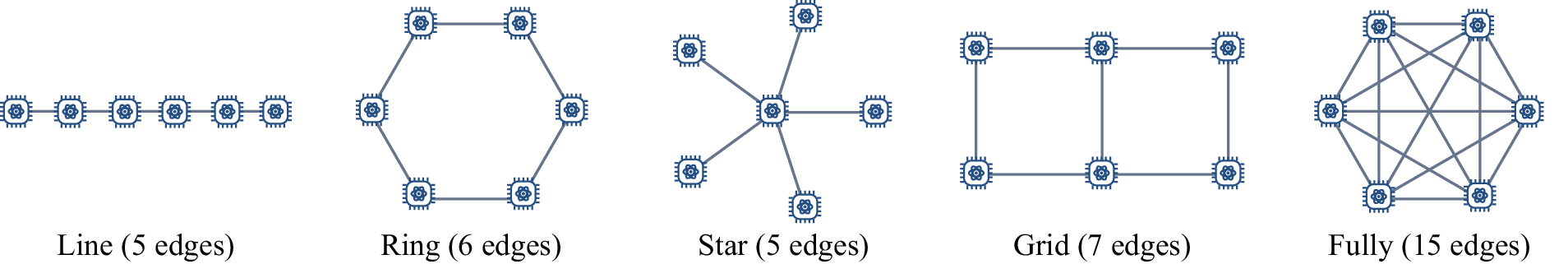}
\Description{Line, Ring, Star, two-by-three Grid, and Fully Connected
interconnection diagrams, each containing six numbered QPU nodes.}
\caption{Six-QPU network topologies used in the experiments.}
\label{fig:6qpu-topologies}
\end{figure*}


\begin{table*}[!t]
\centering
\caption{Shot-allocation optimization results for \textsc{vqe} across six-QPU network topologies.}
\label{tab:6qpu-opt-vqe}
\resizebox{\textwidth}{!}{%
\begin{tabular}{@{}l*{5}{rrrr}@{}}
\toprule
& \multicolumn{4}{c}{\textbf{Line}}
& \multicolumn{4}{c}{\textbf{Ring}}
& \multicolumn{4}{c}{\textbf{Star}}
& \multicolumn{4}{c}{\textbf{Grid}}
& \multicolumn{4}{c}{\textbf{Fully Connected}} \\
\cmidrule(lr){2-5}
\cmidrule(lr){6-9}
\cmidrule(lr){10-13}
\cmidrule(lr){14-17}
\cmidrule(lr){18-21}
Method
& $S$ & $\Delta$ & Utl.  & Time (s)
& $S$ & $\Delta$ & Utl.  & Time (s)
& $S$ & $\Delta$ & Utl.  & Time (s)
& $S$ & $\Delta$ & Utl.  & Time (s)
& $S$ & $\Delta$ & Utl.  & Time (s) \\
\midrule
\texttt{Baseline}
& 100 & -- & 60.0\% & 0.0004
& 100 & -- & 50.0\% & 0.0003
& 100 & -- & 20.0\% & 0.0003
& 100 & -- & 14.3\% & 0.0003
& 100 & -- & 6.7\% & 0.0003 \\
\texttt{\red{AppDP}}
& 199 & 99\% & 99.4\% & 0.3343
& 216 & 116\% & 99.7\% & 0.4358
& 230 & 130\% & 98.0\% & 0.2878
& 348 & 248\% & 99.4\% & 0.6328
& 450 & 350\% & 56.7\% & 1.5474 \\
\texttt{MILP}
& 199 & 99\% & 99.4\% & 0.4373
& 221 & 121\% & 99.3\% & 0.1508
& 233 & 133\% & 99.6\% & 0.2597
& 349 & 249\% & 99.6\% & 0.0795
& 450 & 350\% & 76.7\% & 0.3962 \\
\texttt{\red{Effi}}
& 150 & 50\% & 100.0\% & 0.0009
& 150 & 50\% & 83.3\% & 0.0007
& 225 & 125\% & 90.0\% & 0.0011
& 266 & 166\% & 99.7\% & 0.0013
& 450 & 350\% & 56.7\% & 0.0020 \\
\bottomrule
\end{tabular}%
}
\end{table*}

\section{Experiment}

We evaluate resource-aware candidate combination in two parts. The
\emph{optimization} part measures how much a joint allocation over multiple compiled candidates raises the executable shot rate over the single candidate, at what Bell-pair utilization and solver cost. The \emph{implementation} part then executes the selected allocations---in a distributed-circuit simulator and on IBM hardware---to measure the output fidelity they deliver under an equal Bell-pair budget. Throughout we compare four allocation methods: the single-candidate \texttt{Baseline}, the frontier-capped approximation dynamic programming \texttt{\red{AppDP}}, the integer program \texttt{MILP}, and the greedy
\texttt{\red{Effi}} heuristic, all receiving the same candidate set per instance under a default candidate limit $K=8$. We present the six-QPU simulation study
(Section~\ref{sec:eval-6qpu}), the IBM hardware validation (Section~\ref{sec:hardware}), and the scalability study with up to 36 QPUs (Section~\ref{sec:eval-scalability}).

\subsection{Experimental Setup}
\label{sec:eval-setup}

\paragraph{Resource model and parameters.}
Each inter-QPU link independently supplies Bell pairs at a rate of $R_e=100$ pairs/s. Links incident on the same QPU operate concurrently, and a multi-hop remote operation consumes one Bell pair on every traversed link. For each instance, candidates are generated once using the same partitioning, and all methods receive the identical candidate set. Candidate-generation time is excluded from the reported optimization runtime.

\paragraph{Topologies.}
Figure~\ref{fig:6qpu-topologies} shows the interconnection topologies we evaluate:
Line, Ring, Star, Grid, and Fully Connected (F.C.). Each edge is an independent Bell-pair generation resource, so the achievable throughput depends jointly on the connectivity and on how the selected candidates distribute their communication demand across links.

\paragraph{Circuit workloads.}
The simulation study uses six 10-qubit circuit workloads:
VQE~\cite{peruzzo2014vqe},
QAOA~\cite{farhi2014qaoa},
GHZ~\cite{greenberger1989going},
BITCODE and TFIM~\cite{Tomesh2022SupermarQAS},
and FULLADDER~\cite{li_qasmbench}.
The IBM hardware experiments use 20-qubit VQE, QAOA, and TFIM circuits, and the scalability study uses the same three workloads.

\paragraph{Optimization metrics.}
We report the optimized shot rate $S=\sum_i x_i$ (shots/s), its improvement ($\Delta$ ) over the single-candidate baseline, global Bell-pair utilization, the number of selected candidates $n$, and runtime:
\begin{equation*}
\Delta =\frac{S-S_{\mathrm{Base}}}{S_{\mathrm{Base}}},
\qquad
\mathrm{BP Utl.}
=
\frac{\sum_e\sum_i a_{i,e}x_i}{\sum_e R_e}.
\end{equation*}
Runtime measures the complete optimization procedure of each method, including model construction, optimization, and result parsing.

\paragraph{Implementation metric.}
The implementation part reports output fidelity. For a given allocation we merge the measurement counts of all funded candidates into an empirical distribution $q$ and compare it with the ideal distribution $p$ by the Hellinger fidelity coefficient
\begin{equation*}
F(p,q)=\textstyle\sum_z \sqrt{p(z)\,q(z)} .
\end{equation*}
Merging the counts lets several candidates jointly estimate the same logical output, so a larger executable shot budget reduces sampling error and raises $F$. We measure $F$ as the shot (and Bell-pair) budget of a fixed allocation is scaled to $T$ seconds of entanglement generation.

\subsection{Simulation Results}
\label{sec:eval-6qpu}

\subsubsection{Throughput Optimization Results across Interconnect Topologies}
\label{sec:eval-6qpu-throughput}

Table~\ref{tab:6qpu-opt-vqe} reports the optimization results across the five six-QPU interconnect topologies with VQE circuit workload. Across all topologies, multi-candidate allocation substantially improves the execution shot rate over the single-candidate baseline. The baseline is limited by the most heavily used link of a single compilation and therefore leaves capacity on other links unused. Combining multiple compilations with different per-link demands allows this residual Bell-pair capacity to support additional shots, resulting in higher throughput and generally higher network utilization.

Among the optimization methods, \texttt{MILP} achieves the highest or tied-highest execution shot rate across all topologies. \texttt{\red{AppDP}} obtains results close to \texttt{MILP}, but requires substantially longer optimization time in several cases. \texttt{\red{Effi}} also provides considerable throughput improvement over the baseline while requiring the shortest runtime, close to baseline runtime. These results show that multi-compilation allocation improves throughput across different interconnect structures, while \texttt{\red{Effi}} provides a low-overhead alternative when optimization time is a primary concern.

\begin{table}[t]
\centering
\caption{Simulation fidelity results for 10-qubit circuits on six-QPU systems, averaged across QPU topologies.}
\label{tab:6qpu-fid}
\scriptsize
\setlength{\tabcolsep}{3pt}
\resizebox{\columnwidth}{!}{%
\begin{tabular}{@{}ll rr rrrrr@{}}
\toprule
& & & & \multicolumn{5}{c}{Merged fidelity $F$ at budget $T$ (s)} \\
\cmidrule(lr){5-9}
  & Method & $S$ & $n$ & $T{=}1$ & $T{=}2$ & $T{=}4$ & $T{=}6$ & $T{=}8$ \\
\midrule
\multirow{4}{*}{\textsc{vqe}} & \texttt{Baseline} & 100 & 1.0 & 0.400 & 0.522 & 0.659 & 0.735 & 0.780 \\
 & \texttt{\red{AppDP}} & 288.6 & 5.0 & 0.524 & 0.657 & 0.784 & 0.843 & 0.878 \\
 & \texttt{MILP} & 290.4 & 5.2 & 0.527 & 0.667 & 0.788 & 0.847 & 0.879 \\
 & \texttt{\red{Effi}} & 248.2 & 2.8 & 0.486 & 0.618 & 0.752 & 0.814 & 0.854 \\
\midrule
\multirow{4}{*}{\textsc{qaoa}} & \texttt{Baseline} & 50 & 1.0 & 0.253 & 0.351 & 0.472 & 0.557 & 0.614 \\
 & \texttt{\red{AppDP}} & 168.8 & 4.2 & 0.438 & 0.577 & 0.717 & 0.794 & 0.838 \\
 & \texttt{MILP} & 168.8 & 5.0 & 0.438 & 0.575 & 0.717 & 0.796 & 0.840 \\
 & \texttt{\red{Effi}} & 168.8 & 4.2 & 0.438 & 0.574 & 0.718 & 0.794 & 0.840 \\
\midrule
\multirow{4}{*}{\textsc{ghz}} & \texttt{Baseline} & 10 & 1.0 & 0.840 & 0.845 & 0.828 & 0.827 & 0.833 \\
 & \texttt{\red{AppDP}} & 57.5 & 5.8 & 0.835 & 0.835 & 0.838 & 0.832 & 0.835 \\
 & \texttt{MILP} & 57.5 & 5.8 & 0.834 & 0.840 & 0.840 & 0.834 & 0.835 \\
 & \texttt{\red{Effi}} & 57.5 & 5.8 & 0.834 & 0.838 & 0.838 & 0.840 & 0.835 \\
\midrule
\multirow{4}{*}{\shortstack{\textsc{bit}\\\textsc{code}}} & \texttt{Baseline} & 100 & 1.0 & 0.762 & 0.809 & 0.820 & 0.825 & 0.829 \\
 & \texttt{\red{AppDP}} & 575 & 5.8 & 0.831 & 0.835 & 0.837 & 0.837 & 0.838 \\
 & \texttt{MILP} & 575 & 5.8 & 0.833 & 0.835 & 0.837 & 0.838 & 0.837 \\
 & \texttt{\red{Effi}} & 575 & 5.8 & 0.832 & 0.837 & 0.835 & 0.838 & 0.838 \\
\midrule
\multirow{4}{*}{\textsc{tfim}} & \texttt{Baseline} & 25 & 1.0 & 0.321 & 0.416 & 0.523 & 0.585 & 0.631 \\
 & \texttt{\red{AppDP}} & 143.8 & 5.8 & 0.581 & 0.689 & 0.783 & 0.824 & 0.851 \\
 & \texttt{MILP} & 143.8 & 5.8 & 0.582 & 0.693 & 0.780 & 0.822 & 0.849 \\
 & \texttt{\red{Effi}} & 143.8 & 5.8 & 0.580 & 0.690 & 0.783 & 0.826 & 0.851 \\
\midrule
\multirow{4}{*}{\shortstack{\textsc{full}\\\textsc{adder}}} & \texttt{Baseline} & 10 & 1.0 & 0.105 & 0.134 & 0.190 & 0.232 & 0.270 \\
 & \texttt{\red{AppDP}} & 42.5 & 5.0 & 0.205 & 0.277 & 0.377 & 0.437 & 0.490 \\
 & \texttt{MILP} & 42.5 & 5.0 & 0.197 & 0.277 & 0.374 & 0.440 & 0.492 \\
 & \texttt{\red{Effi}} & 37.8 & 4.5 & 0.184 & 0.259 & 0.358 & 0.414 & 0.457 \\
\bottomrule
\end{tabular}%
}
\end{table}

\subsubsection{End-to-End Simulation Fidelity}
\label{sec:eval-6qpu-fidelity}

\begin{figure*}[t]
\centering
\includegraphics[width=\textwidth]{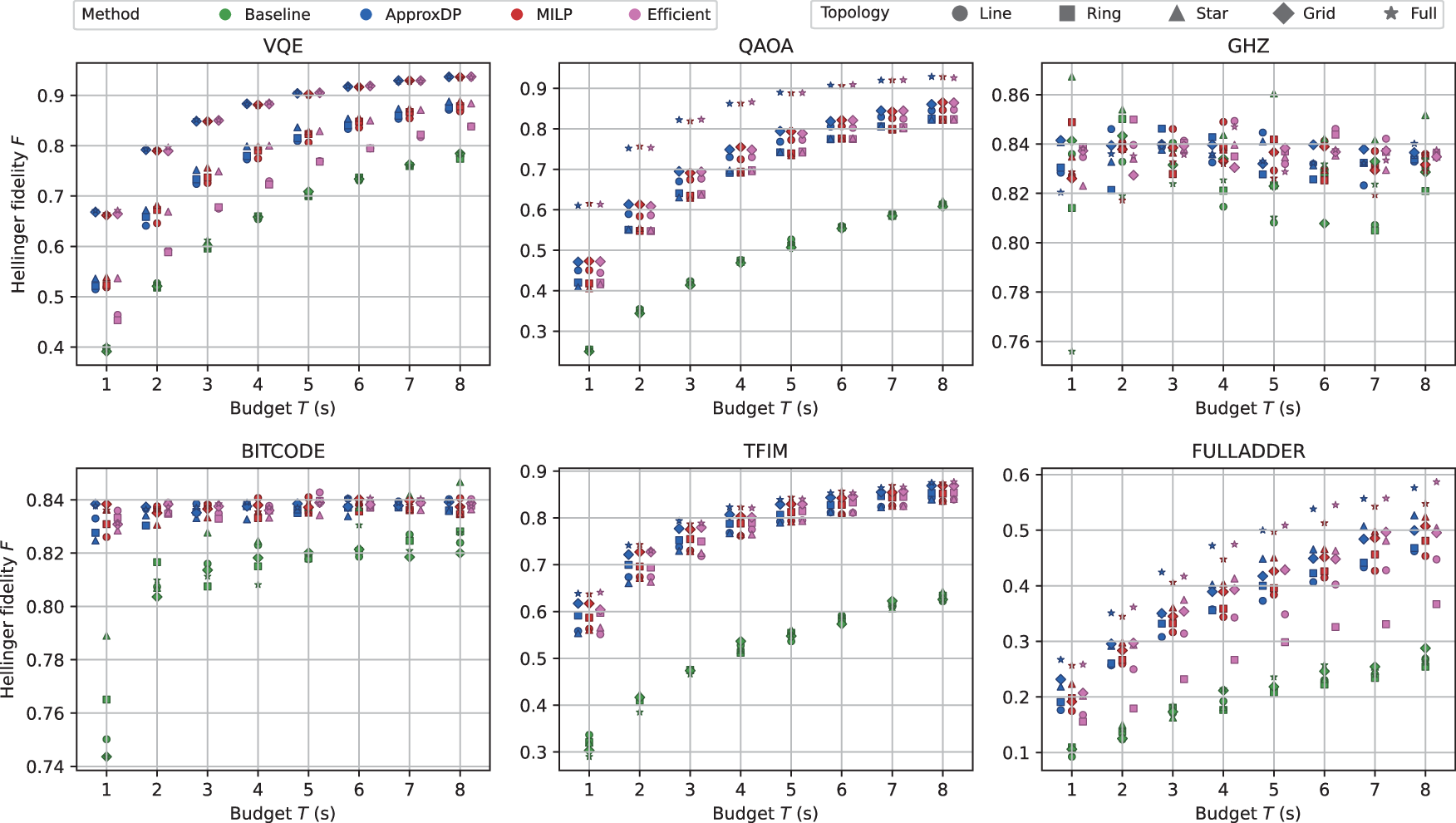}
\caption{Simulation fidelity versus time budget for six 10-qubit circuit workloads. Colors denote allocation methods, markers denote six-QPU topologies.}
\label{fig:fidelity}
\end{figure*}

This experiment evaluates the output fidelity of the optimized allocations on the same six-QPU topologies used in Section~\ref{sec:eval-6qpu-throughput}. Each allocation is fixed while the time budget is scaled from $T=1$ to $8$ seconds, and each reported fidelity is averaged over 10 independent simulation replicates. For GHZ, we set $R_e=10$ pairs/s because its output distribution requires substantially fewer shots to converge.

\paragraph{Average fidelity.}
Table~\ref{tab:6qpu-fid} reports the fidelity of each optimized allocation from
Table~\ref{tab:6qpu-opt-vqe} as its shot
budget is scaled from $T{=}1$ to $8$ seconds; the two share the same workloads, methods, and topologies, so each fidelity row is the measured outcome of the
corresponding allocation.
At the same time budget, multi-candidate allocation executes more shots than the single-candidate baseline and therefore improves fidelity for sampling-limited workloads. The effect is clear for VQE, QAOA, TFIM, and FULLADDER. For example, QAOA improves from $0.25$ to about $0.44$ at $T=1$ and from $0.61$ to about $0.84$ at $T=8$, while TFIM improves from $0.32$ to about $0.58$ at $T=1$ and from $0.63$ to about $0.85$ at $T=8$. VQE shows the same trend, with the multi-candidate methods reaching about $0.88$ at $T=8$ compared with $0.78$ for the baseline. FULLADDER also benefits consistently, although its overall fidelity remains lower.

When the optimized methods achieve similar shot rates, their fidelity is also nearly identical. This is particularly visible for QAOA, GHZ, BITCODE, and TFIM, where \texttt{\red{AppDP}}, \texttt{MILP}, and \texttt{\red{Effi}} produce almost overlapping fidelity values. In contrast, GHZ remains nearly flat around $0.83$ across all budgets, while BITCODE quickly approaches a plateau near $0.84$. For these workloads, the fidelity benefit from additional execution shots has largely saturated.


\begin{table*}[t]
\centering
\caption{Optimization results for VQE-20, QAOA-20, and TFIM-20 on the
nine-QPU partition of \texttt{ibm\_pittsburgh}}
\label{tab:exp2-s60-alloc}
\resizebox{0.9\textwidth}{!}{%
\begin{tabular}{@{}l*{3}{rrrrr}@{}}
\toprule
& \multicolumn{5}{c}{\textbf{VQE-20}}
& \multicolumn{5}{c}{\textbf{QAOA-20}}
& \multicolumn{5}{c}{\textbf{TFIM-20}} \\
\cmidrule(lr){2-6}
\cmidrule(lr){7-11}
\cmidrule(lr){12-16}
Method
& $S$ & $\Delta$ & $n$ & Utl. & Time (s)
& $S$ & $\Delta$ & $n$ & Utl. & Time (s)
& $S$ & $\Delta$ & $n$ & Utl. & Time (s) \\
\midrule
\texttt{Baseline}
& 100 & -- & 1 & 16.7\% & 0.0002
& 50 & -- & 1 & 33.3\% & 0.0002
& 25 & -- & 1 & 16.7\% & 0.0002 \\
\texttt{\red{AppDP}}
& 450 & 350\% & 5 & 87.5\% & 1.5730
& 132 & 164\% & 8 & 91.0\% & 1.5616
& 150 & 500\% & 6 & 100.0\% & 0.8713 \\
\texttt{MILP}
& 450 & 350\% & 6 & 93.8\% & 0.6420
& 168 & 236\% & 8 & 97.5\% & 0.5823
& 150 & 500\% & 8 & 100.0\% & 0.3116 \\
\texttt{\red{Effi}}
& 450 & 350\% & 5 & 87.5\% & 0.0011
& 125 & 150\% & 3 & 83.3\% & 0.0008
& 125 & 400\% & 5 & 83.3\% & 0.0012 \\
\bottomrule
\end{tabular}%
}
\end{table*}

\begin{figure*}[t]
    \centering
    \includegraphics[width=\textwidth]{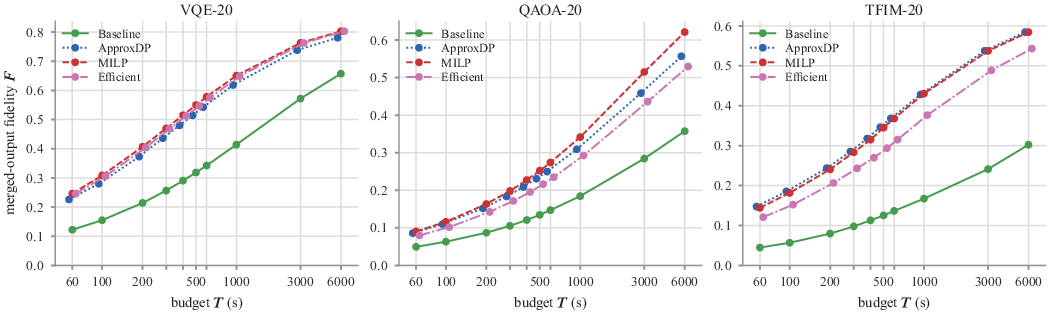}
\caption{Quantum measured fidelity on hardware for VQE-20,
QAOA-20, and TFIM-20 as the time budget $T$ increases.}
    \label{fig:exp2-s60-fidelity}
\end{figure*}

\paragraph{Topology-resolved fidelity.}
Figure~\ref{fig:fidelity} further resolves the fidelity results by inter-QPU topology. For VQE, QAOA, TFIM, and FULLADDER, the multi-candidate methods remain consistently above the single-candidate baseline as the budget increases, showing that their higher shot throughput translates into higher output fidelity. The gap is most pronounced at smaller budgets and gradually narrows as the fidelity benefit from additional execution shots begins to saturate.

\texttt{\red{AppDP}} and \texttt{MILP} generally produce closely clustered fidelity values, while \texttt{\red{Effi}} is also comparable on most topology--budget combinations but can trail when its optimized shot rate is lower.

\subsection{Hardware Results}
\label{sec:hardware}

To validate the framework on real quantum hardware, we repeat the
optimization-and-execution workflow on the 156-qubit
\texttt{ibm\_pittsburgh}. Since the device is monolithic, we emulate the DQC
setting by partitioning it into nine logical QPUs arranged as a $3\times3$ grid
with twelve inter-QPU links. Bell pairs for remote operations are prepared
on-chip using communication qubits assigned to neighboring logical QPUs. We
evaluate VQE, QAOA, and TFIM circuits under the same multi-QPU execution model
described in Section~\ref{sec:eval-setup}.

\subsubsection{Throughput Optimization Results on Hardware}
Table~\ref{tab:exp2-s60-alloc} reports the optimization results on the hardware partition. Multi-candidate allocation substantially increases the execution shot rate and Bell-pair utilization over the single-candidate baseline for all three
workloads, although the achievable improvement differs across circuits. Under \texttt{MILP}, throughput improves by 350\%, 236\%, and 500\% over the baseline for VQE-20, QAOA-20, and TFIM-20 respectively, indicating that different workloads provide different optimization opportunities.

Across optimization methods, \texttt{MILP} achieves the highest or tied-highest execution shot rate in every case. \texttt{\red{AppDP}} remains close to \texttt{MILP} but generally requires longer optimization time, while \texttt{\red{Effi}} achieves substantial throughput improvement with a solver runtime comparable to the \texttt{Baseline}.

\subsubsection{Quantum Hardware Measured Fidelity}
Figure~\ref{fig:exp2-s60-fidelity} reports the measured output fidelity as the time budget increases. At the same time budget, all multi-candidate methods consistently achieve higher fidelity than the single-candidate baseline, showing that the additional executable shots translate into higher measured
fidelity on hardware. At $T=6000$\,s, \texttt{MILP} improves VQE-20 from $0.657$
to $0.803$, QAOA-20 from $0.357$ to $0.621$, and TFIM-20 from $0.302$ to $0.584$,
corresponding to relative fidelity improvements of $22.1\%$, $73.8\%$, and
$93.3\%$, respectively.

The differences among the optimized methods also follow their allocated shot
rates. For VQE-20, all three multi-candidate methods use the same shot rate of
450 shots/s and converge to similar fidelity, reaching $0.780$--$0.803$ at
$T=6000$\,s. For QAOA-20, \texttt{MILP} has the highest shot rate
(168 shots/s) and reaches $0.621$, compared with $0.557$ for \texttt{\red{AppDP}}
and $0.529$ for \texttt{\red{Effi}}. For TFIM-20, \texttt{\red{AppDP}} and
\texttt{MILP} both execute 150 shots/s and nearly overlap at $0.584$, whereas
\texttt{\red{Effi}}, at 125 shots/s, reaches $0.543$.

\begin{table*}[!t]
\centering
\footnotesize
\setlength{\tabcolsep}{2.5pt}
\renewcommand{\arraystretch}{1.05}
\caption{VQE shot allocation on 12--36 QPUs across four QPU topologies.}
\label{tab:exp2-scalability-vqe}
\resizebox{\textwidth}{!}{%
\begin{tabular}{@{}llrrrlrrrlrrrlrrrl@{}}
\toprule
 &  & \multicolumn{4}{c}{\textbf{12 QPUs, 50-qubit}} & \multicolumn{4}{c}{\textbf{20 QPUs, 80-qubit}} & \multicolumn{4}{c}{\textbf{30 QPUs, 120-qubit}} & \multicolumn{4}{c}{\textbf{36 QPUs, 144-qubit}} \\
\cmidrule(lr){3-6}\cmidrule(lr){7-10}\cmidrule(lr){11-14}\cmidrule(lr){15-18}
 & Method & $S$ & $\Delta$ (\%) & Utl. & Time (s) & $S$ & $\Delta$ (\%) & Utl. & Time (s) & $S$ & $\Delta$ (\%) & Utl. & Time (s) & $S$ & $\Delta$ (\%) & Utl. & Time (s) \\
\midrule
Line & \texttt{Baseline} & 33 & 0.0\% & 39.0\% & 0.001 & 16 & 0.0\% & 67.4\% & 0.001 & 12 & 0.0\% & 62.9\% & 0.002 & 8 & 0.0\% & 40.9\% & 0.002 \\
 & \texttt{\red{AppDP}} & 53 & 60.6\% & 74.7\% & 0.626 & 20 & 25.0\% & 68.6\% & 1.615 & 14 & 16.7\% & 79.7\% & 1.610 & 10 & 25.0\% & 63.6\% & 3.650 \\
 & \texttt{MILP} & 55 & 66.7\% & 78.2\% & 0.146 & 20 & 25.0\% & 68.6\% & 0.193 & 15 & 25.0\% & 74.0\% & 0.149 & 10 & 25.0\% & 73.2\% & 0.087 \\
 & \texttt{\red{Effi}} & 49 & 48.5\% & 68.1\% & 0.002 & 16 & 0.0\% & 67.4\% & 0.002 & 12 & 0.0\% & 62.9\% & 0.003 & 9 & 12.5\% & 51.7\% & 0.006 \\
\midrule
Ring & \texttt{Baseline} & 50 & 0.0\% & 54.2\% & 0.001 & 20 & 0.0\% & 46.0\% & 0.001 & 16 & 0.0\% & 48.0\% & 0.002 & 10 & 0.0\% & 50.0\% & 0.002 \\
 & \texttt{\red{AppDP}} & 66 & 32.0\% & 68.7\% & 1.299 & 27 & 35.0\% & 81.8\% & 1.654 & 21 & 31.2\% & 77.6\% & 3.179 & 15 & 50.0\% & 77.7\% & 5.227 \\
 & \texttt{MILP} & 78 & 56.0\% & 89.8\% & 0.099 & 29 & 45.0\% & 87.2\% & 0.419 & 22 & 37.5\% & 82.9\% & 0.102 & 16 & 60.0\% & 89.1\% & 0.083 \\
 & \texttt{\red{Effi}} & 67 & 34.0\% & 71.5\% & 0.002 & 20 & 0.0\% & 46.0\% & 0.002 & 17 & 6.2\% & 51.2\% & 0.004 & 13 & 30.0\% & 65.9\% & 0.006 \\
\midrule
Star & \texttt{Baseline} & 50 & 0.0\% & 31.8\% & 0.000 & 25 & 0.0\% & 28.9\% & 0.001 & 25 & 0.0\% & 25.0\% & 0.001 & 25 & 0.0\% & 29.3\% & 0.002 \\
 & \texttt{\red{AppDP}} & 112 & 124.0\% & 84.7\% & 2.244 & 67 & 168.0\% & 83.9\% & 8.307 & 59 & 136.0\% & 70.4\% & 11.074 & 47 & 88.0\% & 73.9\% & 17.999 \\
 & \texttt{MILP} & 118 & 136.0\% & 89.9\% & 0.118 & 70 & 180.0\% & 85.2\% & 2.804 & 66 & 164.0\% & 79.8\% & 3.568 & 54 & 116.0\% & 77.4\% & 179.474 \\
 & \texttt{\red{Effi}} & 84 & 68.0\% & 57.5\% & 0.002 & 50 & 100.0\% & 59.6\% & 0.004 & 51 & 104.0\% & 67.1\% & 0.005 & 33 & 32.0\% & 45.1\% & 0.006 \\
\midrule
Grid & \texttt{Baseline} & 50 & 0.0\% & 32.4\% & 0.001 & 33 & 0.0\% & 26.6\% & 0.001 & 33 & 0.0\% & 37.7\% & 0.002 & 20 & 0.0\% & 27.7\% & 0.003 \\
 & \texttt{\red{AppDP}} & 101 & 102.0\% & 67.4\% & 3.408 & 56 & 69.7\% & 60.5\% & 8.455 & 47 & 42.4\% & 61.3\% & 10.908 & 38 & 90.0\% & 61.3\% & 17.682 \\
 & \texttt{MILP} & 120 & 140.0\% & 86.6\% & 0.363 & 78 & 136.4\% & 75.8\% & 1.295 & 51 & 54.5\% & 73.9\% & 0.824 & 44 & 120.0\% & 70.5\% & 4.320 \\
 & \texttt{\red{Effi}} & 75 & 50.0\% & 45.6\% & 0.002 & 62 & 87.9\% & 58.6\% & 0.005 & 34 & 3.0\% & 39.1\% & 0.006 & 28 & 40.0\% & 44.1\% & 0.009 \\
\bottomrule
\end{tabular}%
}
\end{table*}

The throughput gain can also be viewed as a reduction in the time required to
reach a target fidelity. Using the baseline fidelity at $T=6000$\,s as the target,
all multi-candidate methods reach or exceed it by $T=3000$\,s for VQE-20 and
QAOA-20, and by only $T=600$\,s for TFIM-20. Thus, at the sampled budgets,
multi-candidate allocation reduces the time to reach the baseline target fidelity
by at least $2\times$ for VQE-20 and QAOA-20 and by $10\times$ for TFIM-20.

\subsection{Scalability Results}
\label{sec:eval-scalability}

To evaluate scalability, we increase the distributed architecture from 12 to 36 QPUs, using 50-, 80-, 120-, and 144-qubit circuits on 12, 20, 30, and 36 QPUs, respectively. We evaluate VQE, QAOA, and TFIM across four topologies and compare the four allocation methods in terms of shot throughput, Bell-pair utilization, and optimization runtime.

\subsubsection{Results for VQE}

\begin{table*}[t]
\centering

\begin{subtable}{\textwidth}
\centering
\footnotesize
\setlength{\tabcolsep}{2.5pt}
\renewcommand{\arraystretch}{1.05}
\resizebox{\textwidth}{!}{%
\begin{tabular}{@{}lrrrlrrrlrrrlrrrl@{}}
\toprule
 & \multicolumn{4}{c}{\textbf{12 QPUs, 50-qubit}} & \multicolumn{4}{c}{\textbf{20 QPUs, 80-qubit}} & \multicolumn{4}{c}{\textbf{30 QPUs, 120-qubit}} & \multicolumn{4}{c}{\textbf{36 QPUs, 144-qubit}} \\
\cmidrule(lr){2-5}\cmidrule(lr){6-9}\cmidrule(lr){10-13}\cmidrule(lr){14-17}
Method & $S$ & $\Delta$ (\%) & Utl. & Time (s) & $S$ & $\Delta$ (\%) & Utl. & Time (s) & $S$ & $\Delta$ (\%) & Utl. & Time (s) & $S$ & $\Delta$ (\%) & Utl. & Time (s) \\
\midrule
\texttt{Baseline} & 62.5 & 0.0\% & 50.3\% & 0.001 & 52.0 & 0.0\% & 43.7\% & 0.001 & 36.3 & 0.0\% & 42.6\% & 0.002 & 35.3 & 0.0\% & 59.3\% & 0.002 \\
\texttt{\red{AppDP}} & 87.8 & 50.0\% & 74.8\% & 2.518 & 74.3 & 50.9\% & 72.3\% & 3.931 & 53.3 & 38.5\% & 74.5\% & 6.264 & 45.5 & 23.7\% & 67.9\% & 8.318 \\
\texttt{MILP} & 104.3 & 74.3\% & 90.9\% & 0.472 & 88.8 & 74.4\% & 85.1\% & 0.680 & 61.8 & 58.3\% & 78.0\% & 24.243 & 57.0 & 49.3\% & 82.0\% & 153.059 \\
\texttt{\red{Effi}} & 82.5 & 40.0\% & 73.5\% & 0.002 & 66.8 & 33.8\% & 57.2\% & 0.003 & 36.3 & 0.0\% & 42.6\% & 0.003 & 35.3 & 0.0\% & 59.3\% & 0.005 \\
\bottomrule
\end{tabular}%
}
\caption{QAOA}
\label{tab:exp2-scalability-qaoa-avg}
\end{subtable}

\vspace{6pt}

\begin{subtable}{\textwidth}
\centering
\footnotesize
\setlength{\tabcolsep}{2.5pt}
\renewcommand{\arraystretch}{1.05}
\resizebox{0.95\textwidth}{!}{%
\begin{tabular}{@{}lrrrlrrrlrrrlrrrl@{}}
\toprule
 & \multicolumn{4}{c}{\textbf{12 QPUs, 50-qubit}} & \multicolumn{4}{c}{\textbf{20 QPUs, 80-qubit}} & \multicolumn{4}{c}{\textbf{30 QPUs, 120-qubit}} & \multicolumn{4}{c}{\textbf{36 QPUs, 144-qubit}} \\
\cmidrule(lr){2-5}\cmidrule(lr){6-9}\cmidrule(lr){10-13}\cmidrule(lr){14-17}
Method & $S$ & $\Delta$ (\%) & Utl. & Time (s) & $S$ & $\Delta$ (\%) & Utl. & Time (s) & $S$ & $\Delta$ (\%) & Utl. & Time (s) & $S$ & $\Delta$ (\%) & Utl. & Time (s) \\
\midrule
\texttt{Baseline} & 37.5 & 0.0\% & 57.0\% & 0.000 & 23.5 & 0.0\% & 48.4\% & 0.001 & 18.0 & 0.0\% & 37.8\% & 0.002 & 16.3 & 0.0\% & 53.8\% & 0.002 \\
\texttt{\red{AppDP}} & 56.0 & 64.0\% & 84.0\% & 2.356 & 39.3 & 61.0\% & 72.0\% & 4.094 & 28.0 & 49.5\% & 64.1\% & 6.575 & 23.5 & 38.8\% & 64.7\% & 7.647 \\
\texttt{MILP} & 63.3 & 85.0\% & 92.6\% & 0.423 & 48.3 & 93.8\% & 85.3\% & 1.036 & 33.0 & 70.6\% & 76.9\% & 14.012 & 30.5 & 69.2\% & 78.1\% & 33.684 \\
\texttt{\red{Effi}} & 48.8 & 45.0\% & 73.6\% & 0.001 & 37.3 & 48.5\% & 66.6\% & 0.004 & 18.5 & 4.2\% & 39.6\% & 0.004 & 18.3 & 10.0\% & 57.1\% & 0.006 \\
\bottomrule
\end{tabular}%
}
\caption{TFIM}
\label{tab:exp2-scalability-tfim-avg}
\end{subtable}

\caption{Shot allocation on 12--36 QPUs, averaged across four QPU topologies (Line/Ring/Star/Grid).}
\label{tab:scal-qaoa-tfim}
\end{table*}

\paragraph{Throughput and topology effects.}
Table~\ref{tab:exp2-scalability-vqe} shows that combining multiple compiled candidates consistently improves shot throughput over the single-candidate \texttt{Baseline}, although the gain varies across methods, scales, and topologies. \texttt{MILP} achieves the highest or tied-highest throughput in all 16 configurations, with improvements of 25.0\%--180.0\% over \texttt{Baseline}. \texttt{\red{AppDP}} generally remains close to \texttt{MILP}, achieving on average 91.1\% of its throughput, while \texttt{\red{Effi}} achieves 75.4\%. The advantage of candidate combination is particularly pronounced on Star and Grid. At 12 QPUs, for example, \texttt{MILP} increases throughput from 50 to 118 shots/s on Star and from 50 to 120 shots/s on Grid (136.0\% and 140.0\%), whereas Line improves only from 33 to 55 shots/s (66.7\%); at 36 QPUs the gaps are similar, with 116.0\% and 120.0\% on Star and Grid against 25.0\% and 60.0\% on Line and Ring. The difference follows from how much capacity a single candidate leaves idle: \texttt{Baseline} utilizes only 25--38\% of the available Bell pairs on Star and Grid but 39--67\% on Line and Ring, so Star and Grid offer far more residual capacity for complementary candidates to absorb.

As the system grows from 12 QPUs with a 50-qubit circuit to 36 QPUs with a 144-qubit circuit, absolute throughput generally decreases because larger circuits consume more Bell pairs per shot. Nevertheless, the relative benefit of candidate combination persists at the largest scale: at 36 QPUs, \texttt{MILP} still more than doubles \texttt{Baseline} throughput on Star (25 to 54 shots/s) and Grid (20 to 44 shots/s), \texttt{\red{AppDP}} reaches 47 and 38 shots/s, and even \texttt{\red{Effi}}, whose gains are smaller and vanish in a few Line and Ring configurations, improves by 32.0\% and 40.0\% there.

\paragraph{Optimization runtime.}
The three methods trade throughput against runtime differently. \texttt{MILP} delivers the highest throughput, but its runtime is instance-dependent and can become large for difficult configurations, reaching 179.5\,s on the 36-QPU Star. \texttt{\red{AppDP}} operates at the second scale (at most 18.0\,s) while staying close to \texttt{MILP} in throughput. \texttt{\red{Effi}} remains at the millisecond scale in every configuration, close to \texttt{Baseline}, while still adding throughput through multi-candidate allocation.

\subsubsection{Results for QAOA and TFIM}

Table~\ref{tab:scal-qaoa-tfim} shows the same trends for QAOA and TFIM, averaged over the four topologies. \texttt{MILP} consistently achieves the highest throughput, while \texttt{\red{AppDP}} retains on average 86.6\% and 87.9\% of it for QAOA and TFIM, respectively. Although absolute throughput again decreases with scale, multi-candidate allocation remains beneficial at 36 QPUs, where \texttt{MILP} still improves over \texttt{Baseline} by 49.3\% for QAOA and 69.2\% for TFIM, and \texttt{\red{AppDP}} by 23.7\% and 38.8\%.

The runtime picture also carries over. \texttt{\red{Effi}}'s gains shrink with scale and vanish for QAOA at 30 and 36 QPUs, but its runtime stays in milliseconds throughout, whereas \texttt{\red{AppDP}} requires several seconds and \texttt{MILP} up to 153.1\,s for QAOA at 36 QPUs.



\section{Conclusion}
In DQC, entanglement is the scarce resource, and a single compilation leaves much of it idle. This work introduces multi-compilation execution, which runs logically equivalent compilations of a circuit concurrently and pools their samples, converting idle per-link Bell-pair capacity into additional shots. We formulate the underlying candidate-selection and shot-allocation problem as CMA, prove it NP-hard, and solve it with a dynamic program and its approximation (\texttt{\red{AppDP}}), a compact MILP, and a millisecond-scale greedy heuristic (\texttt{\red{Effi}}). Across six 10-qubit workloads, five topologies, IBM quantum hardware, and systems of up to 36 QPUs and 144-qubit circuits, multi-compilation raises shot throughput substantially over the best single compilation and translates that throughput into higher output fidelity under equal Bell-pair budgets. These results establish multi-compilation as a practical lever for managing entanglement in distributed quantum systems.

\clearpage
\bibliographystyle{ACM-Reference-Format}
\bibliography{ref}

\end{document}